\newtheorem{theorem}{Theorem}[section]
\newtheorem{proposition}[theorem]{Proposition}
\newtheorem{proof}[theorem]{Proof}
\newtheorem{definition}[theorem]{Definition}
\newtheorem{corollary}[theorem]{Corollary}
\newcommand{\beq}{\begin{equation}}
\newcommand{\feq}[1]{\label{#1} \end{equation}}
\newcommand{\beqr}{\begin{eqnarray}}
\newcommand{\feqr}{\end{eqnarray}}
\def\non{\nonumber}
\newcommand{\rf}[1]{(\ref{#1})}
\definecolor{red}{rgb}{1,0,0}
\DeclareFontFamily{U}{eufm}{}
\DeclareFontShape{U}{eufm}{m}{n}{<->eufm10}{}
\DeclareSymbolFont{mcy}{U}{eufm}{m}{n}
\DeclareMathSymbol{\Hr}{\mathord}{mcy}{"58}
\def\np#1#2#3{Nucl. Phys. #1, {B#2,} #3}
\def\jhp#1#2#3{JHEP #1, #2, #3}
\def\tams#1#2#3{Trans. Amer. Math. Soc. #1, #2, #3}
\def\cm#1#2#3{Coll. Math. #1, {LXVII#2,} #3}
\begin{document}

\begin{center}


{\Large \bf A note on doubly warped product spaces}\\
[4mm]

\large{Agapitos N. Hatzinikitas} \\ [5mm]

{\small Department of Mathematics, \\ 
University of Aegean, \\
School of Sciences, \\
Karlovasi, 83200\\
Samos Greece \\
E-mail: ahatz@aegean.gr}\\ [5mm]

\end{center}

\begin{abstract}
We present a coordinate free approach to derive curvature formulas for pseudo-Riemannian doubly warped product manifolds in terms of curvatures of their submanifolds. We also state the geodesics equation.
\end{abstract}

\noindent\textit{MSC2010:} 53C21, 53C50, 53C22 \\
\textit{Key words:} Warped products, geodesics
\section{Introduction}
\label{sec0}

Singly warped products were first introduced by R. L. Bishop and B. O'Neil \cite{Ref1} in their attempt to construct a class of Riemannian manifolds with negative curvature. O'Neil also studied Robertson Walker, Schwarschild and Kruskal space-times as warped products in \cite{Ref1a}. Later Beem, Ehrlich and Powell pointed out that many exact solutions to Einstein's field equation can be expressed in terms of Lorentzian warped products in \cite{Ref2}. Since then warped product spaces play a crucial role in a plethora of physical applications such as general relativity, string and supergravity theories \cite{Ref3}.
\par In the present work, for pseudo-Riemannian doubly warped product manifolds, we prove expressions that relate the Riemann, Ricci and scalar curvatures with those of their submanifolds. The derivation is coordinate independent and we also give the equation of geodesics. Finally, in the Appendix, we write the components of the curvatures in a local coordinate system.

\section{Preliminaries}
\label{sec1}

In this section adopting the notations of \cite{Ref1a} we recall briefly basic notions of product Riemannian manifolds and give the definition of the doubly warped product space.  
\par Let $(B,g_B)$ and $(F,g_F)$ be m and n-dimensional pseudo-Riemannian manifolds respectively. Then the $M=B\times F$ is an $(m+n)$-dimensional pseudo-Riemannian manifold with $\pi:\, M\rightarrow B$ and $\sigma: \, M\rightarrow F$ the usual smooth projection maps.
\par We use the natural product coordinate systems on the product manifold $M$. If $(p_0,q_0)\in M$ and $(U_B,x)$, $(U_F,y)$ are coordinate charts on $B$ and $F$ such that $p_0\in B$ and $q_0\in F$, then we can define a coordinate chart $(U_M,z)$ on $M$ such that $U_M$ is an open subset in $M$ contained in $U_B\times U_F$, $(p_0,q_0)\in U_M$ and $\forall (p,q)\in U_M$, 
$z(p,q)=(x(p),y(q))$ where $x(p)=(x^1(p),\cdots,x^m(p))$ and $y(q)=(y^{m+1}(q), \cdots, y^{m+n}(q))$.
\par The set of all smooth and positive valued functions $f: \, B\rightarrow R^{+}$ is denoted by $\mathcal{F}(B)=C^{\infty}(B)$. The lift of $f$ to $M$ is defined by $\tilde{f}=f\circ \pi\in \mathcal{F}(M)$.
\par If $x_p\in T_p(B)$ and $q\in F$ then the lift $\tilde{x}_{(p,q)}$ of $x_p$ to $M$ is the unique tangent vector in $T_{(p,q)}(B\times \{q\})$ such that $d\pi_{(p,q)}(\tilde{x}_{(p,q)})=x_p$ and $d\sigma_{(p,q)}(\tilde{x}_{(p,q)})=0$. The set of all such horizontal tangent vector lifts will be denoted by $L_{(p,q)}(B)$.
\par Moreover we can define lifts of vector fields. Let $X\in \Hr(B)$ where $\Hr(B)$ is the set of smooth vector fields, then the lift $\tilde{X}$ of $X$ to $M$ is the unique element of $\Hr(M)$ whose value at each $(p,q)$ is the lift of $X_p$ to $(p,q)$. The set of such lifts will be denoted by $\mathcal{L}_{(p,q)}(B)$.
\begin{definition}
Let $(B,g_B)$ and $(F,g_F)$ be pseudo-Riemannian manifolds and $f:\, B\rightarrow \mathbb R^+$, $h:\, F\rightarrow \mathbb R^+$ be smooth functions. The doubly warped product space is the product manifold furnished with the metric tensor defined by
\beqr
g_M=(h\circ \sigma)^2 \pi^*(g_B)+(f\circ \pi)^2 \sigma^*(g_F).
\label{sec1 : eq1}
\feqr
Explicitly if $v$ is tangent to $B\times F$ at $(p,q)$ then 
\beqr
<v,v>=h^2(q)<d\pi(v),d\pi(v)>+f^2(p)<d\sigma(v),d\sigma(v)>.
\label{sec1 : eq2}
\feqr
We will denote this structure by $M=B\, _h\!\times_f F$. 
\label{sec1 : def1} 
\end{definition}
The warped metric \rf{sec1 : eq1} is characterized by:
\begin{description}
\item[$(\alpha)$] For each $q\in F$ the map $\pi\lceil(B\times \{q\})$ is a positive homothety onto $F$ with scale factor $1/h(q)$.
\item[$(\beta)$] For each $p\in B$ the map $\sigma\lceil(\{p\}\times F)$ is a positive homothety onto $B$ with scale factor $1/f(p)$.
\item[$(\gamma)$] For each $(p,q)\in M$, the leaf $B\times \{q\}$ and the fiber $\{p\}\times F$ are orthogonal at $(p,q)$.
\end{description}
If $h=1$ and $f\neq 1$ then we obtain a singly warped product manifold. If $h=f=1$ then we have a product manifold.
\section{Covariant derivatives}
\label{sec2}

As in the case of a pseudo-Riemannian product manifold we can define the orthogonal projections:
\beqr
\textrm{tan}: \, T_{(p,q)}(M)\rightarrow T_{(p,q)}(\{p\}\times F), \non \\
\textrm{nor}: \, T_{(p,q)}(M)\rightarrow T_{(p,q)}(B\times\{q\})
\label{sec2 : eq1}
\feqr 
and thus vectors tangent to fibers $\{p\}\times F=\pi^{-1}(p)$ are vertical while vectors tangent to leaves $B\times \{q\}=\sigma^{-1}(q)$ are horizontal.
The lifts of functions and vector fields will be denoted without the tilde for simplicity. Also all geometrical quantities with a superscript $B$ (or $F$) are the pullbacks by $\pi$ (or $\sigma$) of the corresponding ones on $B$ (or $F$).
\begin{proposition}
On $M=B\, _h\!\times_f F$, if $X,Y \in \mathcal{L}(B)$ and $V,W \in \mathcal{L}(F)$ then 
\begin{description}
\item[$(1)$] \begin{displaymath} \textrm{nor} \, {}^M \! D_X Y= {}^B \! D_X Y, \quad \textrm{nor} \,{}^M \! D_X Y\in\mathcal{L}(B) \end{displaymath} where $\textrm{nor} \,{}^M \! D_X Y$ is the lift of the Levi-Civita connection ${}^B \! D_X Y$ to $B$.
\item[$(2)$] \begin{displaymath} \textrm{tan} \, {}^M \! D_X Y= II(X,Y)=-\frac{<X,Y>}{h} \textrm{grad} \, h \end{displaymath} where the first fundamental form is defined by $g_M(X,Y)=<X,Y>$ and $II(X,Y)$ is the shape tensor (or second fundamental form) of $B$  to $M$ \footnote{In this relation $h$ stands for $\tilde{h}=h\circ \sigma$ and $grad\, h$ for $grad\, \tilde{h}=\widetilde{grad\, h}$.}.
\item[$(3)$] \begin{displaymath} {}^M D_X V={}^M D_V X, \, \, nor {}^M D_X V=\frac{V h}{h} X, \, \, tan {}^M D_V X=\frac{X f}{f} V \end{displaymath}
\item[$(4)$] \begin{displaymath} \textrm{nor} \, {}^M \! D_V W= II(V,W)=-\frac{<V,W>}{f} \textrm{grad} f \end{displaymath} where $II(V,W)$ is the shape tensor of $F$.
\item[$(5)$] \begin{displaymath} \textrm{tan} \, {}^M \! D_V W= {}^F \! D_V W, \quad \textrm{tan} \, {}^M \! D_V W\in \mathcal{L}(F). \end{displaymath}
\end{description}
\label{sec2 : prop1}
\end{proposition}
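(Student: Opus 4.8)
The plan is to derive every identity from the Koszul formula
\[
2\langle {}^M\! D_X Y, Z\rangle = X\langle Y,Z\rangle + Y\langle X,Z\rangle - Z\langle X,Y\rangle + \langle [X,Y],Z\rangle - \langle [X,Z],Y\rangle - \langle [Y,Z],X\rangle,
\]
which determines ${}^M\! D$ uniquely since $g_M$ is nondegenerate. Before applying it I would record the three structural facts that make the cross terms collapse: (i) lifts from different factors commute, so $[X,V]=0$ for $X\in\mathcal{L}(B)$, $V\in\mathcal{L}(F)$, while $[X,Y]\in\mathcal{L}(B)$ and $[V,W]\in\mathcal{L}(F)$; (ii) a horizontal field kills every function lifted from $F$ and a vertical field kills every function lifted from $B$, so in particular $Xh=0$ and $Vf=0$; and (iii) the metric splits as $\langle X,Y\rangle = h^2\langle X,Y\rangle_B$ and $\langle V,W\rangle = f^2\langle V,W\rangle_F$, with $\langle X,V\rangle =0$. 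To isolate the normal and tangential parts of any covariant derivative it then suffices to pair it against an arbitrary horizontal test field $Z$ and an arbitrary vertical one, and read off the two components using the nondegeneracy of $g_M$ on each subspace.

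For (1) and (2) I would compute ${}^M\! D_X Y$ with $X,Y\in\mathcal{L}(B)$. Pairing with $Z\in\mathcal{L}(B)$, every bracket stays horizontal and $X,Y,Z$ all annihilate $h$, so the factor $h^2$ passes outside every derivative, a common $h^2$ factors out, and the bracket reproduces $2h^2\langle {}^B\! D_X Y,Z\rangle_B = 2\langle {}^B\! D_X Y,Z\rangle$, giving $\textrm{nor}\,{}^M\! D_X Y = {}^B\! D_X Y$. Pairing instead with a vertical $V$, the orthogonalities $\langle X,V\rangle=\langle Y,V\rangle=0$ and the vanishing $[X,V]=[Y,V]=0$, $\langle[X,Y],V\rangle=0$ kill every term except $-V\langle X,Y\rangle$. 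Since $\langle X,Y\rangle = h^2\langle X,Y\rangle_B$ and $V$ annihilates the $B$-pullback $\langle X,Y\rangle_B$, this reduces to $-2h(Vh)\langle X,Y\rangle_B = -(2Vh/h)\langle X,Y\rangle$, whence $\langle {}^M\! D_X Y,V\rangle = -(Vh/h)\langle X,Y\rangle$. Recognising the right-hand side as $\langle -(\langle X,Y\rangle/h)\,\textrm{grad}\,h,\,V\rangle$, using $\langle\textrm{grad}\,h,V\rangle = Vh$ and that $\textrm{grad}\,h$ is vertical, yields (2).

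Parts (4) and (5) follow by the identical computation with the roles of $B$ and $F$, and of $h$ and $f$, interchanged. For (3) I would first note that torsion-freeness together with $[X,V]=0$ gives ${}^M\! D_X V - {}^M\! D_V X = [X,V] = 0$, which is the symmetry statement. To find $\textrm{nor}\,{}^M\! D_X V$ I pair with a horizontal $Y$: all terms vanish except $V\langle X,Y\rangle = (2Vh/h)\langle X,Y\rangle$, so $\langle {}^M\! D_X V,Y\rangle = \langle (Vh/h)X,\,Y\rangle$. To find $\textrm{tan}\,{}^M\! D_X V$ I pair with a vertical $W$: all terms vanish except $X\langle V,W\rangle = (2Xf/f)\langle V,W\rangle$, so $\langle {}^M\! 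D_X V,W\rangle = \langle (Xf/f)V,\,W\rangle$. Nondegeneracy of $g_M$ on each subspace promotes these pairings to the stated vector identities.

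The computation is mechanical once Koszul's formula is in place; the only real work is bookkeeping, and the step I expect to need the most care is the doubly warped twist in (2), (3) and (4): one must carry both conformal factors $h^2$ and $f^2$ simultaneously, check that exactly one Koszul term survives in each pairing, and match the surviving derivative of $h^2$ (resp. $f^2$) with the gradient term, using that $\textrm{grad}\,h$ is vertical while $\textrm{grad}\,f$ is horizontal. Establishing the three structural facts (i)--(iii) cleanly at the outset is what keeps that bookkeeping honest.
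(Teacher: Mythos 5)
Your proof is correct and follows essentially the same route as the paper: parts (2), (3) and (4) are exactly the paper's Koszul-formula computations, including the key identity $V\langle X,Y\rangle = 2\left(\frac{Vh}{h}\right)\langle X,Y\rangle$ obtained from the splitting $\langle X,Y\rangle = h^2(\langle X,Y\rangle_B\circ\pi)$ and the observation that only the $-V\langle X,Y\rangle$ term survives the orthogonality and bracket cancellations. The only (harmless) difference is in (1) and (5), where the paper invokes the fact that homotheties preserve Levi-Civita connections on the leaves and fibers, whereas you derive these too from the Koszul formula by pairing against horizontal (resp.\ vertical) test fields and factoring out $h^2$ (resp.\ $f^2$) --- a slightly more uniform, equally valid treatment.
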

\begin{proof}
\label{sec2 : proof1} 
\end{proof} 
\begin{description}
\item[$(1)$] The vector fields $X, Y$ are tangent to the leaves. On a leaf, $\textrm{nor} \, {}^M \! D_X Y$ is the leaf covariant derivative applied to the restrictions of $X$ and $Y$ to that leaf. Then $\pi$-relatedness follows since homotheties preserve Levi-Civita connections. 
\item[$(2)$] The Koszul formula \footnote{The Koszul formula for a pseudo-Riemannian manifold $M$ reads \begin{displaymath} 2<D_X Y,Z>=X<Y,Z>-Z<X,Y>+Y<Z,X>-<X,[Y,Z]>+<Z,[X,Y]>+<Y,[Z,X]>\end{displaymath} for $X,Y,Z \in \Hr(M)$.} reduces to 
\beqr
2<{}^M D_X Y,V>=-V<X,Y>
\label{sec2 : eq2}
\feqr 
since $<X,V>=<Y,V>=0$, $[X,V]=[Y,V]=0$ and $<V,[X,Y]>=0$. Also
\beqr
V<X,Y>=V[h^2(<X,Y>\circ \pi)]=2\left(\frac{Vh}{h}\right)<X,Y>=2<\frac{<X,Y> grad \, h}{h},V>. 
\label{sec2 : eq3}
\feqr
From \rf{sec2 : eq2} and \rf{sec2 : eq3} we end up with the desired result. 
\item[$(3)$] Since $[X,V]=0$ we have ${}^M D_X V={}^M D_V X$. Using the Koszul formula twice for $<{}^M D_XV,Y>$ and $<{}^M D_X V,U>$ we obtain 
\beqr
2<{}^M D_XV,Y>&=& 2<\textrm{nor}\,{}^MD_X V,Y>=V<X,Y>=2<\frac{V h}{h}X, Y> \non \\
2<{}^M D_X V,W>&=& 2<\textrm{tan}\,{}^MD_X V,W>=X<V,W>=2<\frac{Xf}{f}V,W>.
\label{sec2 : eq4}
\feqr
\item[$(4)$] The proof is identical to (2) for $V,W \in \mathcal{L}(F)$.
\item[$(5)$] This proof is similar to (1) for vector fields tangent to fibers.
\end{description}
\section{Riemann, Ricci and scalar curvatures}
\label{sec3} 

The function $R: \, \Hr^3(M)\rightarrow \Hr(M)$ given by
\beqr
R_{XY}Z=D_{[X,Y]}Z-[D_X,D_Y]Z
\label{sec3 : eq0}
\feqr
is a $(1,3)$ tensor field on $M$ the so called Riemann curvature of $M$. Sometimes we use the symbols $R(X,Y)Z$ or $R(X,Y,Z)$ instead of $R_{XY}Z$.
\begin{proposition}
On $M=B\, _h\!\times_f F$, if $X,Y,Z \in \mathcal{L}(B)$ and $U,V,W \in \mathcal{L}(F)$ then 
\begin{description}
\item[$(1)$] \begin{displaymath} {}^M R_{XY}Z={}^B R_{XY}Z- \frac{\parallel grad\, h\parallel^2}{h^2}\left(<X,Z>Y-<Y,Z>X\right)\end{displaymath} where ${}^M R_{XY}Z$ is the lift of the Riemann curvature tensor ${}^B R_{XY}Z$ to $M$.
\item[$(2)$] \begin{displaymath} {}^M R_{VX}Y=\frac{H^f(X,Y)}{f}V+\frac{<X,Y>}{h}{}^FD_V (grad\, h)={}^M R_{VY}X \end{displaymath} where $H^f$ is the Hessian of $f$ \footnote{The Hessian $H^f$ of $f\in \mathcal{F}(M)$ is the symmetric $(0,2)$ tensor field such that \begin{displaymath} H^f(X,Y)=XYf-(D_XY)f=<D_X(grad \,f),Y>.\end{displaymath}}.
\item[$(3)$] \begin{displaymath} {}^M R_{XY}V=\frac{Vh}{h}\left[\left(\frac{Xf}{f}\right)Y-\left(\frac{Yf}{f}\right)X\right], \end{displaymath}
\begin{displaymath} {}^M R_{VW}X=\frac{Xf}{f} \left[\left(\frac{Vh}{h}\right)W-\left(\frac{Wh}{h}\right)V\right]\end{displaymath}
\item[$(4)$] \begin{displaymath} {}^M R_{XV}W={}^M R_{XW}V=\frac{H^h(V,W)}{h}X+\frac{<V,W>}{f} {}^BD_X (grad\, f) \end{displaymath} where $H^h$ is the Hessian of $h$.
\item[$(5)$] \begin{displaymath} {}^M R_{VW}U={}^F R_{VW} U-\frac{\parallel grad\, f\parallel^2}{f^2}\left( <V,U>W-<W,U>V\right). \end{displaymath}
\end{description}
\label{sec3 : prop1}
\end{proposition}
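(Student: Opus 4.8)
The plan is to insert the covariant-derivative identities of Proposition \ref{sec2 : prop1} into the defining formula \rf{sec3 : eq0}, written as ${}^MR_{XY}Z = {}^MD_{[X,Y]}Z - {}^MD_X{}^MD_YZ + {}^MD_Y{}^MD_XZ$, and to decompose every intermediate vector into its horizontal ($B$) and vertical ($F$) parts using the tan/nor projections of \rf{sec2 : eq1}. Because the leaves $B\times\{q\}$ and the fibers $\{p\}\times F$ are totally umbilic with shape tensors $-\frac{\langle X,Y\rangle}{h}\,\mathrm{grad}\,h$ and $-\frac{\langle V,W\rangle}{f}\,\mathrm{grad}\,f$, the five cases are precisely the five ways of distributing the three slots of $R$ among the two factors, and each reduces to a Gauss-type identity (parts $(1)$,$(5)$), a mixed Codazzi-type identity (parts $(2)$,$(4)$), or a pure cross term (part $(3)$).

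A preliminary step I would carry out first, since it controls all the signs, is to express the $M$-gradients through lifts: because $\tilde h = h\circ\sigma$ while the vertical block of $g_M$ is $f^2\sigma^*(g_F)$, one finds $\mathrm{grad}\,h = \frac{1}{f^2}\widetilde{\mathrm{grad}_F h}$, and symmetrically $\mathrm{grad}\,f = \frac{1}{h^2}\widetilde{\mathrm{grad}_B f}$. Applying Proposition \ref{sec2 : prop1}$(3)$ to the lift $\widetilde{\mathrm{grad}_F h}$ and then carrying the factor $1/f^2$ (which $X$ differentiates, since $Xf\neq 0$) gives ${}^MD_X(\mathrm{grad}\,h) = \frac{\|\mathrm{grad}\,h\|^2}{h}X - \frac{Xf}{f}\,\mathrm{grad}\,h$, with the analogous formula for ${}^MD_V(\mathrm{grad}\,f)$. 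Together with the footnote identity $H^f(X,Y)=XYf-(D_XY)f=\langle {}^MD_X(\mathrm{grad}\,f),Y\rangle$, these are what make the Hessians of $f$ and $h$ and the terms ${}^FD_V(\mathrm{grad}\,h)$, ${}^BD_X(\mathrm{grad}\,f)$ surface in parts $(2)$ and $(4)$.

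With these in hand the rest is mechanical. For $(1)$ I would expand $D_YZ = {}^BD_YZ - \frac{\langle Y,Z\rangle}{h}\mathrm{grad}\,h$, differentiate again by $D_X$ using Proposition \ref{sec2 : prop1}$(1)$,$(2)$ on the horizontal summand and the gradient identity above on the vertical summand, then antisymmetrise in $X,Y$ and add the bracket term: the ${}^BD{}^BD$ pieces assemble into the lift of ${}^BR_{XY}Z$, while the surviving $\|\mathrm{grad}\,h\|^2$ coefficients of $X$ and $Y$ produce the second term. Part $(5)$ is identical under $(B,h)\leftrightarrow(F,f)$. For $(3)$ I would use that the lifts satisfy $[X,V]=0$, so $D_XV=D_VX$ and several brackets drop; repeated use of Proposition \ref{sec2 : prop1}$(3)$ with $Xh=Vf=0$ collapses the expansion to the stated combinations of $\frac{Vh}{h}$ and $\frac{Xf}{f}$. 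Parts $(2)$ and $(4)$ pair one iterated derivative ${}^MD_X{}^MD_V$ (producing a Hessian through the footnote identity) with one covariant derivative of a gradient (producing the ${}^FD_V(\mathrm{grad}\,h)$, resp. ${}^BD_X(\mathrm{grad}\,f)$, term).

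The main obstacle is the bookkeeping of which projection each summand occupies after the second differentiation, and in particular the sign subtlety flagged above: since $\mathrm{grad}\,h$ and $\mathrm{grad}\,f$ are conformal multiples of lifts rather than lifts themselves, one cannot apply Proposition \ref{sec2 : prop1}$(3)$ to them directly, and getting the $1/f^2$ (resp. $1/h^2$) factor right is exactly what fixes the relative signs of the cross terms. I would finish by checking that the resulting expressions obey the curvature symmetry $\langle R_{XY}Z,W\rangle = -\langle R_{XY}W,Z\rangle$, which ties the normal component implicit in $(1)$ to the tangential formula $(3)$ and serves as a consistency test for each case.
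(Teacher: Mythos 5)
Your proposal is correct and follows essentially the same route as the paper: insert the covariant-derivative formulas of Proposition 3.1 into the definition of $R$ and sort the resulting terms by the tan/nor projections, the only divergence being that for parts $(1)$ and $(5)$ the paper simply quotes the Gauss equation for the totally umbilic leaves and fibers instead of expanding the iterated derivatives as you do. Your preliminary identity ${}^M\!D_X(\mathrm{grad}\,h)=\frac{\parallel \mathrm{grad}\,h\parallel^2}{h}X-\frac{Xf}{f}\,\mathrm{grad}\,h$ (coming from $\mathrm{grad}\,h=\frac{1}{f^2}\widetilde{\mathrm{grad}_F h}$) is right, and your closing remark that $(1)$ as displayed is only the horizontal component of ${}^M\!R_{XY}Z$ --- its vertical component being forced by $(3)$ through $\langle R_{XY}Z,W\rangle=-\langle R_{XY}W,Z\rangle$ --- is in fact more careful than the paper, whose argument for $(1)$ only pairs against horizontal fields.
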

\begin{proof}
\label{sec3 : proof1} 
\end{proof} 
\begin{description}
\item[$(1)$] For the vector fields $X,Y,Z,A\in \mathcal{L}(B)$ we have
\beqr
\label{sec3 : eq01}
&& <{}^M R_{XY}Z,A>= <{}^B R_{XY}Z,A>-<II(X,Z),II(Y,A)>+<II(X,A),II(Y,Z)> \non \\
&=& <{}^B R_{XY}Z,A>-\frac{\parallel grad \, h\parallel^2>}{h^2}\left(<<X,Z>Y-<Y,Z>X,A>\right)
\feqr 
\item[$(2)$] Since $[X,V]=0$ from definition \rf{sec3 : eq0} we find 
\beqr
{}^M R_{VX}Y=-{}^M D_V({}^M D_X Y)+{}^M D_X({}^M D_VY).
\label{sec3 : eq1}
\feqr
The first term on the right hand side of \rf{sec3 : eq1}, with the help of Proposition \rf{sec2 : prop1}, gives
\beqr
&& {}^MD_V({}^MD_X Y)= {}^MD_V(nor \, {}^MD_XY+tan \, {}^MD_XY)\non \\
&=& \frac{Vh}{h}{}^B D_XY+\frac{({}^BD_XY)f}{f}V-\frac{Vh}{h^2}<X,Y> grad\, h -\frac{<X,Y>}{h}{}^FD_V(grad \, h)
\label{sec3 : eq2}
\feqr
while the second term becomes
\beqr
{}^MD_X({}^MD_VY)=\frac{Vh}{h}\left({}^BD_XY-\frac{<X,Y>}{h}grad\, h\right)+\frac{XYf}{f}V.
\label{sec3 : eq3}
\feqr
Substituting \rf{sec3 : eq2} and \rf{sec3 : eq3} into \rf{sec3 : eq1} we recover the desired relation. 
\item[$(3)$] Assuming that $[X,Y]=0$ (as, for example, for coordinate vector fields) we first prove that
\beqr
{}^MD_X(D_YV)=\frac{XYf}{f}V+\left(\frac{Yf}{f}\right)\left(\frac{Vh}{h}\right)X+\frac{Vh}{h}{}^BD_YX
\label{sec3 : eq4}
\feqr 
and a similar expression for ${}^MD_Y(D_XV)$ with $X$ and $Y$ interchanged. The claim is then straightforward by \rf{sec3 : eq0}. Also note that $<{}^MR_{XY}V,W>=<{}^MR_{VW}X,Y>=0$. 
\item[$(4)$] The proof is established following steps similar to those in (2).
\item[$(5)$] This relation is justified as in (1).
\end{description}
The Ricci tensor relative to a frame field \footnote{A frame field is a set $\{E_i\}, \, i=1,\cdots, dimM$ of orthonormal vector fields.} is defined by
\beqr
Ric(X,Y)=\sum_{i}\epsilon_i <R_{XE_i}Y,E_i>
\label{sec3 : eq5}
\feqr
where $\epsilon_i=<E_i,E_i>$.
\begin{corollary}
On $M=B\, _h\!\times_f F$, if $X,Y \in \mathcal{L}(B)$, $V,W \in \mathcal{L}(F)$, $m=dim B>1$ and $n=dim F>1$ then
\begin{description}
\item[$(1)$] \begin{displaymath} {}^MRic(X,Y)={}^B Ric(X,Y)-<X,Y>\left[\frac{{}^F\Delta h}{h}+(m-1)\frac{\parallel grad\,h\parallel^2}{h^2}\right]-\frac{n}{f}H^f(X,Y)\end{displaymath} 
\item[$(2)$] \begin{displaymath} {}^MRic(X,V)=(m+n-2)\left(\frac{Xf}{f}\right)\left(\frac{Vh}{h}\right) \end{displaymath}
\item[$(3)$] \begin{displaymath} {}^MRic(V,W)={}^F Ric(V,W)-<V,W>\left[\frac{{}^B\Delta f}{f}+(n-1)\frac{\parallel grad\,f\parallel^2}{f^2}\right]-\frac{m}{h}H^h(V,W)\end{displaymath}
\end{description}
\label{sec3 : cor1}
\end{corollary}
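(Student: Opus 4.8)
The plan is to evaluate the trace in the Ricci definition \rf{sec3 : eq5} against a frame adapted to the product splitting. I would fix a $g_M$-orthonormal frame $\{E_i\}$ and separate it into the horizontal part $\{E_a\}_{a=1}^{m}$ (lifts from $B$, tangent to the leaves) and the vertical part $\{E_\alpha\}_{\alpha=m+1}^{m+n}$ (lifts from $F$, tangent to the fibers), writing $\epsilon_i=<E_i,E_i>$. In each of the three cases the sum $\sum_i\epsilon_i<R_{\bullet E_i}\bullet,E_i>$ then breaks into an $a$-sum and an $\alpha$-sum, into which I substitute the curvature identities of Proposition \ref{sec3 : prop1}. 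The whole computation runs on three scalar trace identities: $\sum_a\epsilon_a^2=m$ and $\sum_\alpha\epsilon_\alpha^2=n$; the reconstruction $\sum_i\epsilon_i<Z,E_i>E_i=Z$, which collapses pairings such as $\sum_a\epsilon_a(E_af)<X,E_a>$ down to $Xf$; and the trace-of-covariant-derivative identity $\sum_\alpha\epsilon_\alpha<{}^FD_{E_\alpha}(\mathrm{grad}\,h),E_\alpha>={}^F\Delta h$, expressing a Laplacian as a divergence.

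Cases (1) and (3) are interchanged by $B\leftrightarrow F$, $h\leftrightarrow f$, $m\leftrightarrow n$, so I would carry out (1) and read off (3). For (1) with $X,Y\in\mathcal L(B)$, the horizontal $a$-sum invokes part (1) of Proposition \ref{sec3 : prop1}: because $h$ is constant along each leaf, the restriction of $g_M$ to a leaf is a constant multiple of $g_B$ and leaves the $(0,2)$ Ricci tensor unchanged, so $\sum_a\epsilon_a<{}^BR_{XE_a}Y,E_a>={}^BRic(X,Y)$, while the remaining curvature term contributes $-(m-1)\frac{\parallel \mathrm{grad}\,h\parallel^2}{h^2}<X,Y>$ once the reconstruction identity turns the coefficient $m$ into $m-1$. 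The vertical $\alpha$-sum I would first rewrite through the antisymmetry $R_{XE_\alpha}Y=-R_{E_\alpha X}Y$ so that part (2) of Proposition \ref{sec3 : prop1} applies; tracing its $\frac{H^f(X,Y)}{f}E_\alpha$ piece over the $n$ vertical directions gives $-\frac{n}{f}H^f(X,Y)$, and tracing its $\frac{<X,Y>}{h}{}^FD_{E_\alpha}(\mathrm{grad}\,h)$ piece gives $-\frac{<X,Y>}{h}{}^F\Delta h$. Summing the two subsums yields (1).

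Case (2) is where I expect the real work, and I would treat it last. Splitting $Ric(X,V)=\sum_a\epsilon_a<R_{XE_a}V,E_a>+\sum_\alpha\epsilon_\alpha<R_{XE_\alpha}V,E_\alpha>$, the horizontal $a$-sum follows directly from the first identity of part (3) of Proposition \ref{sec3 : prop1} and, after the reconstruction identity drops one term, equals $(m-1)(\frac{Xf}{f})(\frac{Vh}{h})$. The vertical $\alpha$-sum does not fall under part (4) in a way that contributes, so I would instead use the pair symmetry $<R_{XE_\alpha}V,E_\alpha>=<R_{VE_\alpha}X,E_\alpha>$ of the Riemann tensor to recast it into the $R_{VW}X$ configuration, to which the second identity of part (3) applies; its trace is $(n-1)(\frac{Xf}{f})(\frac{Vh}{h})$. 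Adding the two gives the stated coefficient $(m+n-2)$. The main obstacle is precisely this mixed term: one has to select the correct curvature identity for the vertical slots (through the Riemann symmetries rather than a naive application of part (4)) and keep the two homothety factors $f$ and $h$ attached to the right submanifold; the rest is the scalar bookkeeping described above.
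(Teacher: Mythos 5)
Your proposal is correct and follows essentially the same route as the paper: split the Ricci trace over an adapted orthonormal frame into horizontal and vertical subsums, feed in the curvature identities of Proposition 3.2 (using antisymmetry for the vertical sum in case (1) and the pair symmetry to reach the $R_{VW}X$ formula in case (2)), and evaluate with the trace identities $\sum\epsilon_i^2=m$ or $n$, the frame reconstruction of $X$, and $\sum_\alpha\epsilon_\alpha<{}^FD_{E_\alpha}(\mathrm{grad}\,h),E_\alpha>={}^F\Delta h$. The bookkeeping, including the $(m-1)+(n-1)$ origin of the coefficient $m+n-2$, matches the paper's computation exactly.
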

\begin{proof}
\label{sec3 : proof2} 
\end{proof} 
Let $\{{}^BE^{i}\}, \, i=1,\cdots,dim B=m$ be a frame field on an open set $U_B\subseteq B$ and $\{{}^FE^{i}\}, \, i=m+1,\cdots,m+dim F=m+n$ be a frame field on an open set $U_F\subseteq F$, then $\{{}^ME^{i}\}, \, i=1,\cdots,dim M=m+n$ be a frame field on the open set $U_M\subseteq U_B\times U_F\subseteq B\times F$.
\begin{description}
\item[$(1)$] Using definition \rf{sec3 : eq5} we have 
\beqr
&& {}^MRic(X,Y)= \sum_{i=1}^{m+n}{}^M\epsilon_{i}<{}^MR_{X{}^ME_i}Y,{}^ME_i> \non \\
&=&\sum_{i=1}^{m}{}^B\epsilon_{i}<{}^MR_{X{}^BE_i}Y,{}^BE_i>+\sum_{i=m+1}^{m+n}{}^F\epsilon_{i}<{}^MR_{X{}^FE_i}Y,{}^FE_i> \non \\
&=& \sum_{i=1}^{m}{}^B\epsilon_{i}\left[<{}^B R_{X{}^BE_i}Y,{}^BE_i>-\frac{\parallel grad \, h\parallel^2}{h^2}\left(<<X,Y>{}^BE_i-<{}^BE_i,Y>X,{}^BE_i>\right)\right]\non \\
&-& \sum_{i=m+1}^{m+n}{}^F\epsilon_{i}\left[\frac{H^f(X,Y)}{f}<V,{}^FE_i>+\frac{<X,Y>}{h}<{}^FD_{{}^FE_i} (grad\, h),{}^FE_i>\right]\non \\
&=& {}^B Ric(X,Y)-\frac{\parallel grad \, h\parallel^2}{h^2}<X,Y>\left(\sum_{i=1}^{m}{}^B\epsilon_i^2-1\right) \non \\
&-& \frac{H^f(X,Y)}{f}\sum_{i=m+1}^{m+n}{}^F\epsilon_i<{}^FE_i,{}^FE_i>-\frac{<X,Y>}{h}\sum_{i=m+1}^{m+n}{}^F\epsilon_i <{}^FD_{{}^FE_i}(grad\,h),{}^FE_i> 
\label{sec3 : eq6}
\feqr
from which taking into account that $\sum_{i=m+1}^{m+n}{}^F\epsilon_i <{}^FD_{E_i}(grad\,h),{}^FE_i>={}^F\Delta h$ we recover the known result.   
\item[$(2)$] Using the definition of Ricci tensor we have
\beqr
&& {}^MRic(X,V)=\sum_{i=1}^{m}{}^B\epsilon_i<{}^BR_{X{}^BE_i}V,{}^BE_i>+\sum_{i=m+1}^{m+n}{}^F\epsilon_i<{}^FR_{V{}^FE_i}X,{}^FE_i> \non \\
&=& \left(\frac{Vh}{h}\right)\left[\left(\frac{Xf}{f}\right)\sum_{i=1}^{m}{}^B\epsilon_i^2-\frac{1}{f}\left(\sum_{i=1}^{m}{}^B\epsilon<X,{}^BE_i>{}^BE_i\right)f\right]\non \\
&+&\left(\frac{Xf}{f}\right)\left[\left(\frac{Vh}{h}\right)\sum_{i=m+1}^{m+n}{}^F\epsilon_i^2-\frac{1}{h}\left(\sum_{i=m+1}^{m+n}{}^F\epsilon<V,{}^FE_i>{}^FE_i\right)h\right]\non \\
&=& (m+n-2)\left(\frac{Xf}{f}\right)\left(\frac{Vh}{h}\right).
\label{sec3 : eq7}
\feqr
\item[$(3)$] The proof is similar to (1).
\end{description}
The scalar curvature $R$ of $M$ is the contraction of its Ricci tensor which relative to a frame field yields
\beqr
R=\sum_{i\neq j} K(E_i,E_j)=2\sum_{i<j}K(E_i,E_j), \quad \textrm{where} \quad K(X,Y)=\frac{<R_{XY}U,V>}{<X,X><Y,Y>-<X,Y>^2}
\label{sec3 : eq8}
\feqr 
is the sectional curvature.
\begin{corollary}
On $M=B\, _h\!\times_f F$, if $X,Y \in \mathcal{L}(B)$, $V,W \in \mathcal{L}(F)$, $m=dim B>1$ and $n=dim F>1$ then
\beqr
{}^M R &=& \frac{{}^B R}{h^2}-2m\frac{{}^F\Delta h}{h}-m(m-1)\frac{\parallel grad\,h\parallel^2}{h^2}\non \\
&+&\frac{{}^F R}{f^2} - 2n\frac{{}^B\Delta f}{f}-n(n-1)\frac{\parallel grad\,f\parallel^2}{f^2}
\label{sec3 : eq9}
\feqr
\label{sec3 : cor2}
\end{corollary}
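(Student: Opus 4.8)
The plan is to obtain \rf{sec3 : eq9} by computing the scalar curvature as the metric trace of the Ricci tensor, ${}^MR=\sum_{j}{}^M\epsilon_j\,{}^MRic({}^ME_j,{}^ME_j)$, rather than through the sectional-curvature sum of \rf{sec3 : eq8}, since the diagonal Ricci components are already delivered by Corollary \rf{sec3 : cor1}. First I would take the adapted $g_M$-orthonormal frame $\{{}^ME_i\}$ of the preceding proof, with ${}^BE_1,\dots,{}^BE_m$ spanning the horizontal (leaf) directions and ${}^FE_{m+1},\dots,{}^FE_{m+n}$ the vertical (fiber) directions, and split the trace into a horizontal block ($i=1,\dots,m$) and a vertical block ($i=m+1,\dots,m+n$). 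Only diagonal entries enter, so the mixed component ${}^MRic(X,V)$ of Corollary \rf{sec3 : cor1}(2) never contributes; on the horizontal block I substitute part (1) and on the vertical block part (3).

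Next I would evaluate the resulting sums exactly as in the passage producing \rf{sec3 : eq6}. Since the frame is $g_M$-orthonormal, ${}^M\epsilon_j\langle {}^ME_j,{}^ME_j\rangle={}^M\epsilon_j^{\,2}=1$, so every term carrying an explicit factor $\langle X,Y\rangle$ or $\langle V,W\rangle$ contributes its full multiplicity $m$ or $n$ after summation. The pure curvature pieces $\sum_{i=1}^{m}{}^B\epsilon_i\,{}^BRic({}^BE_i,{}^BE_i)$ and $\sum_{i=m+1}^{m+n}{}^F\epsilon_i\,{}^FRic({}^FE_i,{}^FE_i)$ collapse to the submanifold scalar curvatures, giving $\frac{{}^BR}{h^2}$ and $\frac{{}^FR}{f^2}$, the factors $h^{-2}$, $f^{-2}$ arising from passing between the $g_M$-orthonormal frame and the $g_B$- and $g_F$-orthonormal frames on the leaf and fiber. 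The Hessian traces $\sum_i{}^B\epsilon_i H^f({}^BE_i,{}^BE_i)$ and $\sum_i{}^F\epsilon_i H^h({}^FE_i,{}^FE_i)$ collapse to the Laplacians ${}^B\Delta f$ and ${}^F\Delta h$, and the gradient-norm brackets survive with their coefficients $(m-1)$ and $(n-1)$.

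The key structural observation, which explains the factor $2$ in the statement, is that each Laplacian appears in both blocks. The term ${}^F\Delta h$ enters the horizontal block directly through the bracket $\frac{{}^F\Delta h}{h}$ of Corollary \rf{sec3 : cor1}(1), contributing $-m\frac{{}^F\Delta h}{h}$, and re-enters the vertical block through the trace of the Hessian term $-\frac{m}{h}H^h(V,W)$ of part (3), contributing a further $-m\frac{{}^F\Delta h}{h}$; together these give the coefficient $-2m$. By the symmetric mechanism, ${}^B\Delta f$ arises from the $\frac{n}{f}H^f$ term in the horizontal block and the $\frac{{}^B\Delta f}{f}$ bracket in the vertical block, yielding $-2n$. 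The gradient-norm terms, by contrast, occur in only one block apiece and therefore keep the single coefficients $-m(m-1)$ and $-n(n-1)$. Assembling the two blocks then gives exactly \rf{sec3 : eq9}.

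The step needing the most care is the frame bookkeeping: one must keep straight that $\{{}^ME_i\}$ is orthonormal for $g_M$, whereas ${}^BR$, ${}^FR$ and the contractions written as ${}^B\Delta f$, ${}^F\Delta h$, $\|grad\,f\|^2$, $\|grad\,h\|^2$ refer to the leaf and fiber geometry, so the conformal factors $h^2$ and $f^2$ must be inserted consistently when each leaf/fiber contraction is recognized. Once these normalizations are fixed as in Corollary \rf{sec3 : cor1}, the remainder is a routine collection of the terms already supplied there.
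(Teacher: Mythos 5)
Your proposal is correct and follows exactly the route the paper intends: the paper defines $^M R$ as the contraction of the Ricci tensor immediately before the corollary and then leaves the proof environment empty, and your trace of Corollary \ref{sec3 : cor1} parts (1) and (3) over the adapted frame, with the observation that each Laplacian term appears once in the horizontal block and once in the vertical block to produce the coefficients $-2m$ and $-2n$, is precisely the intended computation. The only delicate point, which you flag appropriately, is the normalization of $^B\Delta f$, $^F\Delta h$ and the gradient norms relative to the $g_M$-orthonormal frame versus the leaf and fiber metrics; handled with the same conventions as in \rf{sec3 : eq6}, the bookkeeping closes as you describe.
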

If we change the sign in the definition of Riemann curvature then its components change sign but the Ricci tensor and scalar remain intact. 
\section{Geodesics}
\label{sec4}

A curve $\gamma: \, I \rightarrow M$ can be written as $\gamma(s)=(\alpha(s),\beta(s))$ with $\alpha$, $\beta$ the projections of $\gamma$ into $B$ and $F$.
\begin{proposition}
A curve $\gamma=(\alpha,\beta)$ in $M$ is geodesic iff
\begin{description}
\item[$(1)$] \begin{displaymath} \beta^{''}=<\alpha^{'},\alpha^{'}>h\circ \beta grad\, h-\frac{2}{(f\circ \alpha)}\frac{d}{ds}(f\circ \alpha) \, \beta^{'}. \end{displaymath} 
\item[$(2)$] \begin{displaymath} \alpha^{''}=<\beta^{'},\beta^{'}>f\circ \alpha grad\, f-\frac{2}{h\circ \beta}\frac{d}{ds}((h\circ \beta)) \, \alpha^{'}. \end{displaymath}  
\end{description}
\label{sec4 : prop1} 
\end{proposition}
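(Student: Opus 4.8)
The plan is to characterise $\gamma$ as a geodesic by the vanishing of its acceleration, ${}^M D_{\gamma'}\gamma'=0$, and then to read off the two equations by resolving this acceleration along the two factors. At every point $T_{(p,q)}(M)$ is the orthogonal direct sum of the leaf-tangent and fibre-tangent subspaces (the projections $\textrm{tan}$ and $\textrm{nor}$ of \rf{sec2 : eq1}), so the acceleration vanishes if and only if both of its components do; this is what will give the equivalence ``$\gamma$ geodesic $\iff$ $(1)$ and $(2)$''. First I would write the velocity as $\gamma'=\alpha'+\beta'$, where $\alpha'$ is horizontal (the lift of the velocity of $\alpha$ in $B$) and $\beta'$ is vertical (the lift of the velocity of $\beta$ in $F$).

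Next I would expand the acceleration along $\gamma$ by the Leibniz rule,
\[
{}^M D_{\gamma'}\gamma'={}^M D_{\alpha'}\alpha'+{}^M D_{\alpha'}\beta'+{}^M D_{\beta'}\alpha'+{}^M D_{\beta'}\beta',
\]
and evaluate the four terms with Proposition \rf{sec2 : prop1}, after extending $\alpha'$ and $\beta'$ to local fields in $\mathcal{L}(B)$ and $\mathcal{L}(F)$; this is legitimate since the covariant derivative along $\gamma$ depends only on the values of the fields on $\gamma$ and the five formulas are tensorial in the differentiating slot. Part $(1)$ gives that the normal part of ${}^M D_{\alpha'}\alpha'$ is the lift of the $B$-acceleration $\alpha''$ and part $(2)$ its tangential part $-\frac{<\alpha',\alpha'>}{h}\,\textrm{grad}\,h$; symmetrically parts $(4)$ and $(5)$ give the normal and tangential parts of ${}^M D_{\beta'}\beta'$. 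The two mixed terms coincide by part $(3)$ (as ${}^M D_X V={}^M D_V X$), so together they contribute $2\,{}^M D_{\alpha'}\beta'=\frac{2}{h}\frac{d}{ds}(h\circ\beta)\,\alpha'+\frac{2}{f}\frac{d}{ds}(f\circ\alpha)\,\beta'$, again by part $(3)$.

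Collecting the horizontal components and setting them to zero then yields equation $(2)$ for $\alpha''$, and collecting the vertical components yields equation $(1)$ for $\beta''$. The step that needs the most care is the metric bookkeeping: the inner products appearing in Proposition \rf{sec2 : prop1} are computed in $g_M$, whereas $<\alpha',\alpha'>$ and $<\beta',\beta'>$ in the stated equations are the submanifold products in $g_B$ and $g_F$. Converting between them brings in the warping factors $h^2$ and $f^2$ of \rf{sec1 : eq1}, which combine with the $1/h$ and $1/f$ of parts $(2)$ and $(4)$ to leave exactly the prefactors $(h\circ\beta)$ and $(f\circ\alpha)$ in front of $\textrm{grad}\,h$ and $\textrm{grad}\,f$. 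I expect this conversion of ambient inner products into submanifold ones, together with the identifications $V h=\frac{d}{ds}(h\circ\beta)$ and $X f=\frac{d}{ds}(f\circ\alpha)$, to be the main obstacle; the remaining manipulations are routine linearity.
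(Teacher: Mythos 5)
Your argument coincides with the paper's own proof: write $\gamma'=\alpha'+\beta'$, expand ${}^M D_{\gamma'}\gamma'$ into the four terms $D_XX+D_XV+D_VX+D_VV$, evaluate them with Proposition \rf{sec2 : prop1}, and set the tangential and normal components separately to zero; your remark that the ambient product $<X,X>=h^2<\alpha',\alpha'>_B$ must be converted to the submanifold one is exactly the bookkeeping that produces the prefactors $h\circ\beta$ and $f\circ\alpha$ in the stated equations. The one step you pass over too quickly is the extension of $\alpha'$ and $\beta'$ to lifted fields $X\in\mathcal{L}(B)$ and $V\in\mathcal{L}(F)$: this requires $\alpha$ and $\beta$ to be regular, i.e.\ locally integral curves, and that fails precisely when $\gamma'(0)$ is horizontal or vertical, for then $\beta'(0)$ or $\alpha'(0)$ vanishes and cannot be realized as the restriction of a (nonvanishing) lift. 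Tensoriality in the differentiating slot does not help here, because the vanishing field sits in the differentiated slot. The paper closes this case by a separate argument: a geodesic with horizontal or vertical nonzero initial velocity does not remain in a leaf $\sigma^{-1}(q)$ or fiber $\pi^{-1}(p)$, so there is a sequence $s_i\to 0$ with $\gamma'(s_i)$ neither horizontal nor vertical, and equations $(1)$ and $(2)$ then hold at $s=0$ by continuity. Apart from this omitted degenerate case, your proof is correct and follows the same route as the paper.
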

\begin{proof}
\label{sec4 : proof1} 
\end{proof} 
\begin{description}
\item[$(1)$] Consider the case when $\gamma^{'}(0)$ is neither horizontal nor vertical. Then $\alpha$, $\beta$ are regular, namely $\forall s\in I, \, \alpha^{'}(s), \beta^{'}(s)\neq 0$, and thus locally integral curves. This means that $\alpha^{'}(s)=X_{\alpha(s)}$ and $\beta^{'}(s)=V_{\beta(s)}$ for $X\in \mathcal{L}(B)$ and $V\in\mathcal{L}(F)$. Moreover $\gamma$ is an integral curve and
\beqr
\gamma^{''}=D_{X+V}(X+V)=D_XX+D_XV+D_VX+D_VV.
\label{sec4 : eq1}
\feqr
The curve $\gamma$ is geodesic iff $\gamma^{''}=0$ or equivalently iff $\tan \, \gamma^{''}=\rm{nor} \, \gamma^{''}=0$. Using Proposition (3.1) in \rf{sec4 : eq1} we obtain
\beqr
\textrm{tan}\left(D_XX+D_XV+D_VX+D_VV \right)&=& 0 \quad \Rightarrow \non \\
-\frac{<X,X>}{h}\textrm{grad}\, h+2\frac{Xf}{f}V+{}^FD_VV&=& 0 
\label{sec4 : eq2}
\feqr
which proves (1). The second identity is reproduced by taking the $\textrm{nor}$ component of \rf{sec4 : eq1}.
\item[$(2)$] If $\gamma^{'}(0)$ is horizontal or vertical and nonzero then the geodesic $\gamma$ does not remain in the leaves $\sigma^{-1}(q)$ or the fibers $\pi^{-1}(p)$. Hence there is a sequence $\{s_i\}\rightarrow 0$ such that $\forall i$, $\gamma^{'}(s_i)$ is neither horizontal nor vertical. Then $(1)$ and $(2)$ follow by continuity of Case 1. 
\end{description}


\addcontentsline{toc}{subsection}{Appendix}
\section*{Appendix}
\label{apA}
\renewcommand{\theequation}{A.\arabic{equation}}
\setcounter{equation}{0}
Let $\{x^{\mu}\}, \, \mu=1,\cdots,dim B=m$ be a local coordinate system on an open set $U_B\subseteq B$ and $\{y^{\alpha}\}, \, \alpha=m+1,\cdots,m+dim F=m+n$ be a local coordinate system on an open set $U_F\subseteq F$, then $\{x^{\hat{\mu}}\}, \, \hat{\mu}=1,\cdots,dim M=m+n$ is a local coordinate system on the open set $U_M\subseteq U_B\times U_F\subseteq B\times F$. The components of the Levi-Civita connections are
\beqr
\textrm{nor} \, {}^M \! D_{\partial_{\mu}} (\partial_{\nu})&=& {}^B \! D_{\partial_{\mu}} (\partial_{\nu})={}^B\Gamma^{\rho}_{\mu \nu} \partial_{\rho},
\label{A :eqa1} \\
\textrm{tan} \, {}^M \! D_{\partial_{\mu}} (\partial_{\nu})&=&-\frac{<\partial_{\mu},\partial_{\nu}>}{h} grad \, h=-\frac{h}{f^2}g_{\mu \nu}g^{\alpha \beta} \partial_{\beta}h \, \partial_{\alpha}=\Gamma^{\alpha}_{\mu \nu}\partial_{\alpha},
\label{A :eqa2} \\
\textrm{nor} {}^MD_{\partial_{\mu}}( \partial_{\alpha})&=&\frac{\partial_{\alpha}h}{h}\delta^{\nu}_{\mu}\partial_{\nu}=\Gamma^{\nu}_{\mu \alpha} \partial_{\nu}, \,\, \textrm{tan} {}^M D_{\partial_{\alpha}} (\partial_{\mu})=\frac{\partial_{\mu}f}{f}\delta^{\beta}_{\alpha} \partial_{\beta}=\Gamma^{\beta}_{\alpha \mu} \partial_{\beta},
\label{A :eqa3} \\
\textrm{nor} \, {}^M \! D_{\partial_{\alpha}} (\partial_{\beta})&=&-\frac{f}{h^2}g_{\alpha \beta}g^{\mu \nu} \partial_{\nu}f \, \partial_{\mu}=\Gamma^{\mu}_{\alpha \beta} \partial_{\mu},
\label{A :eqa4}\\
\textrm{tan} \, {}^M \! D_{\partial_{\alpha}} (\partial_{\beta})&=& {}^F\Gamma^{\gamma}_{\alpha \beta} \partial_{\gamma}.
\label{A :eqa5}
\feqr
For coordinate vector fields, $R_{\partial_{\hat{\lambda}}\partial_{\hat{\rho}}}(\partial_{\hat{\nu}})=R^{\hat{\mu}}_{\,\,\,\hat{\nu} \hat{\lambda} \hat{\rho}}\partial_{\hat{\mu}}$ and the components of the Riemann curvature are given by
\beqr
{}^MR^{\mu}_{\,\,\,\nu \lambda \rho}&=&{}^BR^{\mu}_{\,\,\,\nu \lambda \rho}-\frac{\parallel\partial_{\alpha}h\parallel^2}{f^2}\left(\delta^{\mu}_{\lambda}g_{\nu \rho}-\delta^{\mu}_{\rho}g_{\nu \lambda}\right),
\label{A : eq6} \\
{}^MR^{\alpha}_{\,\,\, \mu \beta \nu}&=& -\frac{\delta^{\alpha}_{\beta}}{f}D_{\nu}(\partial_{\mu}f) -\frac{h}{f^2}g_{\mu \nu}D_{\beta}(\partial^{\alpha}h),
\label{A : eq7} \\
{}^MR^{\alpha}_{\,\,\,\mu \beta \gamma}&=& \partial_{\mu}(\ln f)\left[\delta^{\alpha}_{\beta}\partial_{\gamma}(\ln h)-\delta^{\alpha}_{\gamma}\partial_{\beta}(\ln h)\right],
\label{A : eq71}\\
{}^MR^{\mu}_{\,\,\,\alpha \nu \lambda}&=& \partial_{\alpha}(\ln h)\left[\delta^{\mu}_{\nu}\partial_{\lambda}(\ln f)-\delta^{\mu}_{\lambda}\partial_{\nu}(\ln f)\right],
\label{A : eq72}\\
{}^MR^{\alpha}_{\,\,\,\mu \nu \lambda}&=&\frac{1}{2f^2}\partial^{\alpha}(h^2)\left[g_{\mu \lambda}\partial_{\nu}(\ln f)-g_{\mu \nu}\partial_{\lambda}(\ln f)\right],
\label{A : eq8} \\
{}^MR^{\mu}_{\,\,\,\alpha \beta \gamma}&=&\frac{1}{2h^2}\partial^{\mu}(f^2)\left[g_{\alpha \gamma}\partial_{\beta}(\ln h)-g_{\alpha \beta}\partial_{\gamma}(\ln h)\right],
\label{A :eq81}\\
{}^MR^{\mu}_{\,\,\, \alpha \nu \beta}&=& -\frac{\delta^{\mu}_{\nu}}{h}D_{\beta}(\partial^{\alpha}h) -\frac{f}{h^2}g_{\alpha \beta}D_{\nu}(\partial^{\mu}f),
\label{A : eq9} \\
{}^MR^{\alpha}_{\,\,\,\beta \gamma \epsilon}&=&{}^FR^{\alpha}_{\,\,\,\beta \gamma \epsilon}-\frac{\parallel\partial_{\mu}f\parallel^2}{h^2}\left(\delta^{\alpha}_{\gamma}g_{\beta \epsilon}-\delta^{\alpha}_{\epsilon}g_{\beta \gamma}\right).
\label{A : eq10} 
\feqr
The components of the Ricci tensor are
\beqr
{}^MR_{\mu \nu}&=& {}^BR_{\mu \nu}-\frac{g_{\mu \nu}}{f^2}\left[h {}^F\Delta h+(m-1)\parallel\partial_{\alpha} h\parallel^2\right]-\frac{n}{f}D_{\mu}(\partial_{\nu}f),
\label{A : eq11}\\
{}^MR_{\mu \alpha}&=&(m+n-2)\partial_{\mu}(\ln f)\partial_{\alpha}(\ln h),
\label{A : eq12}\\
{}^MR_{\alpha \beta}&=& {}^BR_{\alpha \beta}-\frac{g_{\alpha \beta}}{h^2}\left[f{}^B\Delta f+(n-1)\parallel\partial_{\mu} f\parallel^2\right]-\frac{m}{h}D_{\alpha}(\partial_{\beta}h)
\label{A : eq13}
\feqr
and the Ricci scalar is given by
\beqr
{}^M R &=& \frac{{}^B R}{h^2}-2m\frac{{}^F\Delta h}{hf^2}-m(m-1)\frac{\parallel\partial_{\alpha}(\ln h)\parallel^2}{f^2}\non \\
&+&\frac{{}^F R}{f^2} - 2n\frac{{}^B\Delta f}{fh^2}-n(n-1)\frac{\parallel\partial_{\alpha}(\ln f)\parallel^2}{h^2}.
\label{A : eq14}
\feqr
All the expressions of the Appendix have also been recorded in \cite{Ref4} and \cite{Ref5}. 
\bibliographystyle{plain}

\end{document}